\numberwithin{equation}{subsection}
\def\@seccntformat#1{\csname the#1\endcsname. }
\let\savenumberline\numberline
\def\numberline#1{\savenumberline{#1.}}
\theoremstyle{theorem}
\newtheorem{thrm}{Theorem}[section]
\newtheorem{corl}[thrm]{Corollary}
\newtheorem{prop}[thrm]{Proposition}
\declaretheorem[name=Example,style=remark,qed=$\triangle$,sibling=thrm]{exmp}
\declaretheorem[name=Remark,style=remark,qed=$\Diamond$,sibling=thrm]{rema}
\theoremstyle{definition}
\newtheorem{defn}[thrm]{Definition}
\newcommand{\sech}{\operatorname{sech}}
\newcommand\bd{\mathrm{d}}
\begin{document} 

\rightline{QGaSLAB-15-02}
\vspace{1.8truecm}

\vspace{15pt}

%%%%%%%%%%%%%%%%%

{\LARGE{  
\centerline{Probability Density Functions from the Fisher Information Metric} 
}}  

\vskip.5cm 

\thispagestyle{empty} \centerline{
    {\large  T. Clingman$^{a}$\footnote{\tt tslil.clingman@gmail.com}, Jeff Murugan$^{a,b}$\footnote{\tt jeff.murugan@.uct.ac.za}}
   {\large and Jonathan P. Shock$^{a,b}$\footnote{\tt jonathan.shock@uct.ac.za}} }

\vspace{.4cm}
\centerline{{\it $^{a}$The Laboratory for Quantum Gravity \& Strings,}}
\centerline{{\it Astrophysics, Cosmology \& Gravity Center \&}}
\centerline{{\it Department of Mathematics and Applied Mathematics,}}
\centerline{{\it University of Cape Town,}}
\centerline{{\it Private Bag, Rondebosch, 7700, South Africa}}
\vspace{.4cm}
\centerline{{\it $^{b}$National Institute for Theoretical Physics,}}
\centerline{{\it Private Bag X1,}}
\centerline{{\it  Matieland,} }
\centerline{{\it South Africa } }
\vspace{1.4truecm}

%%%%%%%%%%%%%%%%%
\thispagestyle{empty}

\centerline{\bf ABSTRACT}

\vskip.4cm 

We show a general relation between the spatially disjoint product of
probability density functions and the sum of their Fisher information
metric tensors. We then utilise this result to give a method for
constructing the probability density functions for an arbitrary
Riemannian Fisher information metric tensor. We note further that this
construction is extremely unconstrained, depending only on certain
continuity properties of the
probability density functions and a select symmetry of their domains.\\

\setcounter{page}{0}
\setcounter{tocdepth}{2}

\newpage

\tableofcontents

\setcounter{footnote}{0}

\linespread{1.1}
\parskip 4pt

{}~
{}~

%%%%%%%%%%%%%%%%%%%%%%%%%%%%%%%%%%%%%%%%%%%%%%%%%%%%%%%%%%%%%%%%%%%%%%%%%%%%%%%%%%%%%
\section{Introduction}
%%%%%%%%%%%%%%%%%%%%%%%%%%%%%%%%%%%%%%%%%%%%%%%%%%%%%%%%%%%%%%%%%%%%%%%%%%%%%%%%%%%%%

Information geometry is the study of the natural differential
structures which arise on the space of families of probability density
functions. The Fisher information metric defines a notion
of the distance between two particular members of a family of
probability density functions and is the natural measure arising out
of the small change expansion of the Kullback-Liebler divergence
\cite{KL1951}. The existence of such a distance measure is of obvious
utility for answering questions related to, for example, the mutual
information of two systems described by different probability density
functions, the likely error made in approximating one distribution by
another, and even a definition of a gradient descent algorithm
consistent with the differential geometric structure of a probability
space \cite{SIA98}.

The study of information geometry was first expounded upon in detail
by Shun'Ichi Amari and the foundations were laid out in
\cite{SIAbook}. A great deal is now known about the geometric
properties of information manifolds. In particular, given a family of
probability density functions, the associated Fisher information
metric may be stated as a concrete integral (or sum in the case of discrete variables). However, comparatively
little is known about the `reverse' operation. That is, given a
Riemannian metric tensor, what can be said about the family of
probability density functions which are naturally endowed with such a
metric tensor? In this short note we show how one can, in theory, perform
this inverse process and observe that it is far from one-to-one.

Our interest in the subject is not from the point of view of machine
learning or information theory as such. In recent years, a new link
has surfaced between information geometry and the study of space-time
as an emergent phenomenon. Within string theory there has been much
work over the last 15 years in the study of how the dynamics of
interacting gauge theories in the limit of a large number of gauge
degrees of freedom can give rise to emergent spacetimes of a variety
of geometries. The most natural such structure arises out of a
scale-free gauge theory providing, holographically, an anti-de Sitter
space \cite{Maldacena} -- the so-called AdS/CFT
correspondence. Coincidentally, the Euclidean version of anti-de
Sitter space (a hyperbolic geometry) is a geometry which emerges
frequently from a large class of different probability density
functions. Indeed in the construction used by Hitchin \cite{Hitchin},
such a space arises naturally out of symmetry arguments when the
Fisher information metric tensor is computed from the instanton moduli
space in such gauge theories. In \cite{Blau} these two ideas were tied
together, showing how Information Geometry seemed to give a natural
means for calculating emergent geometries in an AdS/CFT context.
Interesting relationships between information geometry, quantum
information and string theory/holography have been studied also in
\cite{Tak}, \cite{Matsueda}, \cite{Rey} and \cite{Heck}.

In what follows, we explore in more detail the link between
information and geometry.

%%%%%%%%%%%%%%%%%%%%%%%%%%%%%%%%%%%%%%%%%%%%%%%%%%%%%%%%%%%%%%%%%%%%%%%%%%%%%%%%%%%%%
\numberwithin{equation}{subsection}
\section{The Fisher information metric}\label{section1}
\subsection{Families of probability density functions and their associated geometries}
%%%%%%%%%%%%%%%%%%%%%%%%%%%%%%%%%%%%%%%%%%%%%%%%%%%%%%%%%%%%%%%%%%%%%%%%%%%%%%%%%%%%%
For the purposes of this work, we will assume a narrow definition of a
family of probability density functions.  That is, when we write
`family of probability density functions' we will mean a family of
continuous functions $P_{\theta}:X\rightarrow \mathbb{R}$ for some
domain $X\subset\mathbb{R}^{n}$, parameterised over
$\theta\in M\subset \mathbb{R}^{m}$ (ie. an $m$-parameter family of
distributions). Coordinatizing $X$ by $x=(x^{1},\ldots,x^{n})$ and the
parameter space $M$ by $\theta=(\theta^1,\dots,\theta^m)$, we will
also further require that
$\displaystyle \partial_{a}P_{\theta}:=\frac{\partial
  P}{\partial\theta^a}$
is continuous on $X$ for all $\theta\in M$.  Furthermore, we will also
require that every member of the family be normalised, that is,
\[ (\forall\theta\in M)\ \int_{X} P(x;\theta)\,\bd x=1. \]
All of this may be succinctly restated as $\{P_{\theta}\}$ being a
parametrised family of normalised, continuous functions which changes
`smoothly' over parameter space. Finally, we will refer to $X$ as the
\textit{spatial} domain and $M$ as the \textit{parametric} domain, and
conventionally associate the spatial domain $X_{i}$ to probability
density function $P_{i}$.

\newpage

We now define the Fisher Information metric tensor on a finite
dimensional statistical manifold. Given such a manifold, ${\cal M}$,
whose points form a family of probability density functions with the
properties listed above, there exists a Riemannian metric tensor on
${\cal M}$, viz.,
\begin{equation}
  g_{ab}(\theta) = \int_X\,P(x;\theta)\,\partial_a\ln P(x;\theta)\,\partial_b\ln
  P(x;\theta)\, \bd x.
\label{eq:Fisher}
\end{equation}
The central question addressed in this paper may thus be stated as:
given a Riemannian metric tensor $g$, under what circumstances can a
family of probability density functions $P$ be found such that the Fisher
information metric tensor of $P$ is $g$.

%%%%%%%%%%%%%%%%%%%%%%%%%%%%%%%%%%%%%%%%%%%%%%%%%%%%%%%%%%%%%%%%%%%%%%%%%%%%%%%%%%%%%
\subsection{Some examples}
%%%%%%%%%%%%%%%%%%%%%%%%%%%%%%%%%%%%%%%%%%%%%%%%%%%%%%%%%%%%%%%%%%%%%%%%%%%%%%%%%%%%%
In order to build some intuition for the relationship between a family
of probability density functions and their associated metrics, we give
here two examples of the computation of the Fisher metric.

\subsubsection{Univariate Normal Distribution}
Here the family of probability density functions is given by
\begin{equation*}
  P(x;\theta)=\frac 1 {\sigma\sqrt{2\pi}} e^{-\frac 1 2 \left(\frac{x-\mu}\sigma\right)^2}.
\end{equation*}
The distribution is parameterised by $\mu$ and $\sigma$, which we will
collectively denote $\theta$. Put another way, the manifold
coordinates are given by $\theta=(\mu,\sigma)$, and the random
variable is $x\in \mathbb{R}$. Note that the parametric domain is
$\mathbb{R}\times\mathbb{R}^{>0}$. In order to compute $g_{ab}$ we
must compute $\partial_{a}\ln P$
\begin{align*}
\ln P&= -\left[\frac 1
  2\left(\frac{x-\mu}\sigma\right)^2+\ln\sigma+\ln\sqrt{2\pi}\right], \\
\frac{\partial}{\partial\mu}\ln P &= \frac 1 \sigma \left(\frac{x-\mu}\sigma\right),\quad
\frac{\partial}{\partial\sigma}\ln P = \frac 1 \sigma \left[\left(\frac{x-\mu}\sigma\right)^2-1\right].
\end{align*}
Then, using \autoref{eq:Fisher}, the Fisher metric for the univariate normal distribution has
\[ [g]=
\begin{bmatrix}
  \frac 1 {\sigma^2} & 0 \\
  0 & \frac 2 {\sigma^2}
\end{bmatrix} \implies \bd s^2=\frac{\bd \mu^2+2\bd\sigma^2}{\sigma^2}.
\]
Thus we see that the Fisher metric, in this case, describes the metric
tensor of a two-dimensional hyperbolic geometry. The structure on this
geometry can be intuitively understood by the properties of normal
distributions. In particular, for distributions with $\sigma\gg 1$,
the associated `difference' between two distributions with means
$\mu_1$ and $\mu_2$ is less pronounced -- they are harder to
distinguish. For two sharply peaked distributions ($\sigma\ll 1$) with
even similar $\mu$, the difference will be very pronounced and so they
are easy to distinguish. Hence the hyperbolic nature of the space.

\subsubsection{Cauchy Distribution}
The family of probability density functions for this distribution is
given by
\[ P(x; x_0,\gamma)=\frac 1 \pi\left[\frac \gamma
  {\gamma^2+(x-x_0)^2}\right]. 
\]
Thus, the parameter space for this family is spanned by the parameters
$\theta=(x_0,\gamma)\in \mathbb{R}\times \mathbb{R}^{>0}$ and the
calculation of the logarithmic derivatives gives
\begin{align*}
  \ln P&=\ln\gamma-\ln\left[\gamma^2+(x-x_0)^2\right]-\ln\pi,\\
  \frac{\partial}{\partial x_{0}}\ln P&=\frac {2(x-x_0)}{\gamma^2+(x-x_0)^2},\quad
  \frac{\partial}{\partial\gamma}\ln P=\frac 1 \gamma-\frac{2\gamma}{\gamma^2+(x-x_0)^2}.
\end{align*}
As such, it is a simple matter to verify that the Fisher metric for
the Cauchy distribution is given by
\[ g_{ab}=\frac{\delta_{ab}}{2\gamma^2}\implies \bd s^2=\frac 1
2\left(\frac{\bd x_0^2+\bd\gamma^2}{\gamma^2}\right).
\]
The reader may wish to note that while we started with a very
different distribution, the geometric structure described by its
Fisher metric is very close to that of the normal
distribution. In this sense, hyperbolic spaces (or Euclidean anti de-Sitter spaces) appear
ubiquitous in an information geometric context.

%%%%%%%%%%%%%%%%%%%%%%%%%%%%%%%%%%%%%%%%%%%%%%%%%%%%%%%%%%%%%%%%%%%%%%%%%%%%%%%%%%%%%
\numberwithin{equation}{subsection}
\section{Reversing the Fisher information metric}
%%%%%%%%%%%%%%%%%%%%%%%%%%%%%%%%%%%%%%%%%%%%%%%%%%%%%%%%%%%%%%%%%%%%%%%%%%%%%%%%%%%%%

It is not clear at first glance that it is at all possible to reverse
the process of computing the Fisher metric in any meaningful way, as
the exercise involves a definite integral of multiple powers of the
underlying family of probability density functions. We present below a
motivating example to suggest that under certain, constrained
situations such a process is indeed possible. As a prototype for a more general construction, 
we demonstrate how to encode the metric tensor of $\mathbb{S}^{n}$, for any
$n\in \mathbb{N}$, in a family of one dimensional probability density
functions.

%%%%%%%%%%%%%%%%%%%%%%%%%%%%%%%%%%%%%%%%%%%%%%%%%%%%%%%%%%%%%%%%%%%%%%%%%%%%%%%%%%%%%
\subsection{The $n-$dimensional sphere, $\mathbb{S}^{n}$}
%%%%%%%%%%%%%%%%%%%%%%%%%%%%%%%%%%%%%%%%%%%%%%%%%%%%%%%%%%%%%%%%%%%%%%%%%%%%%%%%%%%%%

We begin our exploration of reversing the Fisher information
computation with a one-dimensional family of
probability density functions. In particular, we leverage the
properties of orthonormal functions to produce a family of probability
density functions which, with an appropriate set of functions $h^i$, 
give rise to the metric tensor of $S^n$.

Note that, for our purposes, a family of univariate, real-valued
functions $\{f_i(x)\}_{i\in I}$ is said to be orthonormal with weight
$w(x)$ over a domain $X$ if
$\int_X f_i(x)f_j(x)w(x)\bd x=\delta_{ij}$.

\begin{prop}
  \label{prop:1dd}
  Let $M\subset \mathbb{R}^{n}$ and $h^i\in C^{1}(M)$
  such that\footnote{Here we use Einstein
    summation and the lowered and raised indices have no differential
    geometric interpretation other than to aid in the appropriate
    summations} $(\forall\theta\in M)\, h^{i}h^{j}\delta_{ij}=4$
  and $\{f_i(x)\}_1^n$ be a set of orthonormal, real-valued functions
  with positive semidefinite weight $w(x)$ over
  $X\subset\mathbb{R}$. Then the family of probability density
  functions
  \begin{equation}
    \label{eq:1dd}
    P(x;\theta)=\frac14\left(\sum_{i=1}^nh^i(\theta)f_i(x)\right)^2w(x),
  \end{equation}
  gives the Fisher information metric tensor
  $ g_{ab}=(\partial_a h^i)(\partial_bh^j)\delta_{ij}$.
\end{prop}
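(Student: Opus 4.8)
The plan is to substitute \autoref{eq:1dd} directly into the defining formula \autoref{eq:Fisher} and exploit the fact that the squared structure of $P$ makes all $\theta$-dependence in the integrand drop out except through the coefficient functions $h^i$. First I would check that \autoref{eq:1dd} genuinely defines a family of probability density functions in the sense of \autoref{section1}: positivity is immediate since $P$ is a perfect square times the positive semidefinite weight $w$; continuity of $P_\theta$ and of $\partial_a P_\theta$ in $x$ follows from continuity of the $f_i$ and $w$ together with $h^i\in C^1(M)$; and normalisation is the computation
\[
  \int_X P(x;\theta)\,\bd x=\frac14\sum_{i,j}h^i(\theta)h^j(\theta)\int_X f_i(x)f_j(x)w(x)\,\bd x=\frac14\,h^ih^j\delta_{ij}=1,
\]
where the final equality is exactly the hypothesis $h^ih^j\delta_{ij}=4$. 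This is the only place that the prefactor $\tfrac14$ and the constant $4$ are used.

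Next, writing $S(x;\theta):=\sum_{i=1}^n h^i(\theta)f_i(x)$ so that $P=\tfrac14 S^2 w$, I would compute $\partial_a\ln P=2\,\partial_a S/S=2\,(\partial_a h^i)f_i/S$ and hence
\[
  P\,\partial_a\ln P\,\partial_b\ln P=\frac14 S^2 w\cdot\frac{4\,(\partial_a h^i)f_i\,(\partial_b h^j)f_j}{S^2}=w\,(\partial_a h^i)(\partial_b h^j)\,f_i f_j.
\]
The cancellation of the two factors of $S^2$ is the crux of the argument: after it, the integrand no longer involves $S$, and since $(\partial_a h^i)(\partial_b h^j)$ is independent of $x$ it pulls out of the integral, leaving $(\partial_a h^i)(\partial_b h^j)\int_X f_i f_j w\,\bd x=(\partial_a h^i)(\partial_b h^j)\delta_{ij}$, the claimed metric.

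The one genuine subtlety — and the step I expect to require the most care — is that $\ln P$, hence $\partial_a\ln P$, is singular on the zero locus $Z_\theta:=\{x\in X: S(x;\theta)=0\}$, so the manipulation above is a priori valid only on $X\setminus Z_\theta$. I would handle this by rewriting the integrand of \autoref{eq:Fisher} as $(\partial_a P)(\partial_b P)/P$ and observing that, off $Z_\theta$, it equals $w\,(\partial_a h^i)(\partial_b h^j)f_i f_j$ identically: since $\partial_a P=\tfrac12 S\,(\partial_a h^j)f_j\,w$, the two factors of $S$ cancel against the $S^2$ in $P$. The right-hand side is a bounded continuous function on all of $X$, so $Z_\theta$ contributes nothing to the integral (it either has measure zero or is annihilated by $w$) and the value of the Fisher integral is unchanged by the singularity. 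With this caveat dispatched and the continuity hypotheses on $f_i$, $w$ and the $C^1$ regularity of the $h^i$ in hand, the displayed computation is rigorous and yields $g_{ab}=(\partial_a h^i)(\partial_b h^j)\delta_{ij}$.
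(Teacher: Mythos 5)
Your proposal is correct and follows essentially the same route as the paper: normalisation via orthonormality together with the constraint $h^ih^j\delta_{ij}=4$, then cancellation of the squared sum $S^2$ between $P$ and the logarithmic derivatives, and a final appeal to orthonormality to produce $\delta_{ij}$. The only difference is your explicit treatment of the zero locus of $\sum_i h^if_i$, a point the paper's proof passes over silently.
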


\begin{proof}
  That $P$ is normalised follows trivially from the orthonormality of
  $f_i$.
  \begin{align*}
    &\frac14\int_X\left(\sum_{i=1}^nh^i(\theta)f_i(x)\right)^2w(x)\,\bd x
    =\frac14\int_X\left(\sum_{i=1}^n\sum_{j=1}^nh^ih^jf_if_j\right)w\,\bd x\\
 &=\frac14\sum_{i=1}^n\sum_{j=1}^n\int_Xh^ih^jf_if_jw\,\bd  x
 =\frac14h^ih^j\delta_{ij}=1.
  \end{align*}
  A straightforward computation gives the desired result.
  \begin{align*}
    g_{ab}&=\int_X P(\partial_a \ln P)(\partial_b\ln P)\bd  x\\
    &=\int_X w\left(\sum_{i=1}^nh^if_i\right)^2
    \left(\frac{\sum\limits_{i=1}^n(\partial_a
        h^i)f_i}{\sum\limits_{i=1}^nh^if_i}\right)
    \left(\frac{\sum\limits_{i=1}^n(\partial_b
        h^i)f_i}{\sum\limits_{i=1}^nh^if_i}\right)\,\bd x\\
    &=\sum_{i=1}^n\sum_{j=1}^n\int_X(\partial_ah^i)(\partial_bh^j)f_if_jw\,\bd 
    x =(\partial_ah^i)(\partial_bh^i)\delta_{ij}.
  \end{align*}
\end{proof}

Now we pause to note that we may view the above statement,
$g_{ab}=(\partial_ah^i)(\partial_bh^i)\delta_{ij}$, as the 
result of applying the transition functions $h$
to the flat Euclidean metric $\delta$. As such, and noting that we required
$h^{i}h^{j}\delta_{ij}=4$, we immediately infer that

\begin{corl}
  The metric tensor of $\mathbb{S}^n$ can be reached as the Fisher
  Information metric of the distribution \autoref{eq:1dd} where $h$ is
  the transition function from $\mathbb{E}^n$ to $4\mathbb{S}^n$, the
  $n$-dimensional sphere of radius four.
\end{corl}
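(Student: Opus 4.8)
The plan is to realize the map $h$ of \autoref{prop:1dd} as (the Euclidean–space avatar of) a coordinate parametrization of the round sphere, so that the identity $g_{ab}=(\partial_ah^i)(\partial_bh^j)\delta_{ij}$ established there immediately exhibits the round metric. The first point to get straight is a small piece of bookkeeping: the proof of \autoref{prop:1dd} never uses the coincidence between $\dim M$ and the number of functions $\{f_i\}$; all it actually needs is that there be exactly as many orthonormal functions as there are components $h^i$. So for an $n$-sphere I would apply the proposition with $n+1$ functions $h^1,\dots,h^{n+1}$ and with $M\subset\mathbb{R}^n$ of dimension $n$ (the would-be coordinate chart on $\mathbb{S}^n$).

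Concretely I would fix a standard chart $\psi$ on $\mathbb{S}^n$ (spherical coordinates, say) with open domain $M\subset\mathbb{R}^n$, compose it with the inclusion of the sphere of the appropriate radius into $\mathbb{E}^{n+1}$, and call the resulting $\mathbb{R}^{n+1}$-valued map $h$, normalized so that $h^ih^j\delta_{ij}$ equals the constant forced by \autoref{eq:1dd} (namely $4$, owing to the $\tfrac14$ prefactor). By construction $h^ih^j\delta_{ij}=4$ holds identically on $M$, and the $h^i$ are smooth, hence $C^1$, on the open chart domain. I would then take any $n+1$ real-valued functions $\{f_i\}_1^{n+1}$ orthonormal with respect to a positive semidefinite weight $w$ on an interval $X\subset\mathbb{R}$ --- for instance the normalized trigonometric system on $[-\pi,\pi]$ with $w\equiv 1$, or the first $n+1$ shifted Legendre polynomials. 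With these choices \autoref{eq:1dd} defines a legitimate one-dimensional family of probability density functions satisfying all the hypotheses of \autoref{prop:1dd}.

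Applying \autoref{prop:1dd} then yields that the Fisher metric of this family is $g_{ab}=(\partial_ah^i)(\partial_bh^j)\delta_{ij}$, which is precisely the pullback $h^{\ast}\delta$ of the flat metric of $\mathbb{E}^{n+1}$ along $h$. Because $h$ factors through the embedded sphere and $\psi$ is an immersion, this pullback is exactly the first fundamental form of that sphere written in the chart $\psi$, i.e.\ a coordinate representation of the round metric of $4\mathbb{S}^n$. In the language of the remark preceding the corollary, applying the transition function $h$ from $\mathbb{E}^n$ to $4\mathbb{S}^n$ to $\delta$ returns the round metric, which is the assertion.

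I expect the only genuine subtlety to be one of coverage rather than computation: a single coordinate chart parametrizes $\mathbb{S}^n$ only away from a lower-dimensional locus (poles, an antipodal cut, \textit{etc.}) on which $h$ fails to be an immersion --- or even $C^1$ --- so, strictly, the construction above reproduces the round metric on a dense open subset of $\mathbb{S}^n$. To reach all of $\mathbb{S}^n$ one either patches together several such charts with their associated distributions, or reads the corollary chart-wise, which is the sense in which it is intended. One should also keep in mind the harmless abuse of notation by which the overall scale of $4\mathbb{S}^n$ is the one fixed by the constraint $h^ih^j\delta_{ij}=4$.
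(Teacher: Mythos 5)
Your proposal is correct and follows essentially the same route as the paper, which offers no separate proof beyond the observation that $g_{ab}=(\partial_ah^i)(\partial_bh^j)\delta_{ij}$ is the pullback $h^{*}\delta$ and that the normalisation constraint $h^ih^j\delta_{ij}=4$ forces the image of $h$ onto a sphere. Your added bookkeeping --- that \cref{prop:1dd} really needs $n+1$ functions $h^i$ and $f_i$ over an $n$-dimensional $M$, and that a single chart only covers the sphere minus a lower-dimensional locus --- tightens genuine loose ends in the paper's statement without changing the argument.
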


In the above we have shown a general way to find a given metric tensor
in terms of the transition functions from flat Euclidean space to a
desired geometry. However, there is a specific condition on the $h^i$
given by $h^ih_i=4$ which constrains these strongly. In what follows,
we will generalise this result in a way which will remove this
constraint.

%%%%%%%%%%%%%%%%%%%%%%%%%%%%%%%%%%%%%%%%%%%%%%%%%%%%%%%%%%%%%%%%%%%%%%%%%%%%%%%%%%%%%
\subsection{The Gaussian construction}\label{ssec:gauss}
%%%%%%%%%%%%%%%%%%%%%%%%%%%%%%%%%%%%%%%%%%%%%%%%%%%%%%%%%%%%%%%%%%%%%%%%%%%%%%%%%%%%%
Now that we have reason to believe that it is possible, at least in
special cases, to pick a metric tensor and construct a family of
probability density functions whose Fisher information metric is the
selected metric, we attempt to extend our results to arbitrary
Riemannian metrics.

Consider a family of probability density functions given by a product
of $n$, uncorrelated, disjoint, one-dimensional Gaussian probability
density functions with unit variance. Explicitly,
\begin{equation}
\label{eq:gaussian1}
 P(x;\theta) = \frac 1 {\sqrt {(2\pi)^n}}
\exp\left(-\frac 1 2 \sum\limits_{i=1}^n\left(x^i- h^i(
      \theta)\right)^2\right),  
\end{equation}
where $M$, the parametric domain, is not yet fixed,
$X=\mathbb{R}^{n}$, and $h^i\in C^{1}(M)$. From this, we may
compute the Fisher information metric as follows
\begin{align*}
  g_{ab}&=\frac 1 {\sqrt{(2\pi)^n}}\int_{X}\bd  x\,
  e^{-\frac 1
    2\sum\limits_{i=1}^n\left(x^i-{h^i}\right)^2}\left(\sum_{j=1}^n(\partial_ah^j)\left(x^j-{h^j}\right)\right)
  \left(\sum_{k=1}^n(\partial_bh^k)\left(x^k- {h^k}\right)\right)\\
  &=\frac 1 {\sqrt{(2\pi)^n}}\int_{X}\bd  x\, e^{-\frac 1
    2\sum\limits_{i=1}^n\left(x^i- {h^i}
      \right)^2}\left(\sum_{j=1}^n(\partial_ah^j)(\partial_bh^j)\left(x^j-
    {h^j} \right)^2+\parbox{5em}{{\centering vanishing\\cross-terms}}\right)\\
  &=\sum\limits_i\left\{(\partial_a h^i)(\partial_b
    h^i)\prod\limits_k\left(\frac 1 {\sqrt {2\pi}}\int\limits_{-\infty}^\infty\bd  x^k\,e^{-\frac
        1 2\left(x^i-{h^i} \right)^2}\left(x^k- {h^k}
        \right)^2\right)\right\}.
\end{align*}
It is a simple matter to complete the computation to obtain
\begin{equation}
  \label{eq:cm}
  g_{ab}=(\partial_ah^j)(\partial_bh^k)\delta_{jk}.
\end{equation}

This result allows us enough flexibility to be able to always give an
$h$ and $M$ such that $g_{ab}$ may be constructed as desired. In particular,
we may begin at \autoref{eq:cm} and read backwards to find
\autoref{eq:gaussian1}. In doing so, we fix a desired $g_{ab}$ and
accompanying manifold $\mathcal{M}$, and attempt to realise an $h$ and
$M$ for which \autoref{eq:cm} would hold. Unlike the case of
\cref{prop:1dd}, which came with the constraint $h^{i}h_{i}=4$, this
process is here always possible.

The Nash Embedding Theorem \cite{Nash} tells us that there is an $n\in \mathbb{N}$
such that $(\mathcal{M},g)$ may be $C^{1}$ isometrically embedded in
$(\mathbb{E}^{n},\delta)$. Specifically then, it tells us that there
exists an $h$ such that $g=h^{*}\delta$. As such, interpreting
\autoref{eq:cm} as the statement that $g$ is the pullback of $\delta$
via $h$ we see that we need only select an $n$ large enough to
accommodate the Nash embedding of the desired manifold $\mathcal{M}$ in
$\mathbb{E}^{n}$ (which is always possible) and we have $h$ and $M$ to
satisfy the arrangement. Consequently, we have a family of probability
density functions, given by \autoref{eq:gaussian1} whose Fisher
information metric is the desired, arbitrary Riemannian metric.

Said another way, \autoref{eq:cm} states simply that $g_{ab}$ is the
pullback from a higher dimensional flat space to a manifold embedded
in that space, via $h$. In the case of coincidence of dimensions
between $g$ and $h$, the result bears the simple interpretation of $h$
acting as a set of transition functions from $\delta$ to $g$.

%%%%%%%%%%%%%%%%%%%%%%%%%%%%%%%%%%%%%%%%%%%%%%%%%%%%%%%%%%%%%%%%%%%%%%%%%%%%%%%%%%%%%
\subsubsection{The metric of $\mathbb{S}^{2}$}
%%%%%%%%%%%%%%%%%%%%%%%%%%%%%%%%%%%%%%%%%%%%%%%%%%%%%%%%%%%%%%%%%%%%%%%%%%%%%%%%%%%%%
To cement the understanding of the importance and generality of
\autoref{eq:cm} we construct the metric tensor of
$\mathbb{S}^{2}$. Suppose we desire a family of probability density
functions whose Fisher information metric is the metric tensor of
$\mathbb{S}^{2}$. Specifically, if the unit sphere has line element
\begin{equation*}
ds^2=\bd \theta^2+\sin^2\theta\bd \phi^2,
\end{equation*}
then we can proceed as outlined above, and write down a set of
transition functions
\begin{equation*}
h=(\cos\theta\sin\phi,\sin\theta\sin\phi,\cos\phi),
\end{equation*} 
from $\mathbb{E}^{3}$ to the embedded $\mathbb{S}^2$. Applying the construction
of \autoref{eq:gaussian1} we find

\begin{equation*}
  P(x,y,z;\theta,\phi)={(2\pi)^{-\frac 3 2}}e^{-\frac 1
  2\left(\left(x-{\cos\theta\sin\phi}\right)^2+
    \left(y-{\sin\theta\sin\phi}\right)^2+\left(z-{\cos\phi}\right)^2\right)}.
\end{equation*}

This is easily recognisable as a product of three Gaussian probability
density functions, each with a mean which is periodic in the
parameters. This means that we have the geometry and topology of a
sphere, where each point on the sphere corresponds to a three
dimensional Gaussian distribution with unit variance and mean denoted
by the point on the sphere. This exercise can be performed for any
$\mathbb{S}^n$ by simply forming the appropriate $h$.

The ease with which we are able to perform this construction is
indicative of the power underlying \autoref{eq:cm} and the
accompanying statement that {\it any Riemannian metric tensor may be
reached via this construction}.

%%%%%%%%%%%%%%%%%%%%%%%%%%%%%%%%%%%%%%%%%%%%%%%%%%%%%%%%%%%%%%%%%%%%%%%%%%%%%%%%%%%%%
\numberwithin{equation}{subsection}
\subsection{The hyperbolic secant construction}\label{ssec:sec}
%%%%%%%%%%%%%%%%%%%%%%%%%%%%%%%%%%%%%%%%%%%%%%%%%%%%%%%%%%%%%%%%%%%%%%%%%%%%%%%%%%%%%
In the previous subsection we gave a construction based upon a product
of Gaussian probability density functions and demonstrated its
flexibility. Now we demonstrate that the above-mentioned
results are just as achievable with an entirely different family of
probability density functions. Consider the family

\[ P=\frac 1 {\pi^n}\prod\limits_{i=1}^n\sech\left(x^i-h^i\sqrt
  2\right).\]
\newpage
Other than the functional dependence on $\sim x^i-h^i$, this is
entirely different from the Gaussians discussed earlier. However,
computing the Fisher information metric we find the
result to be of that most general form
\[g_{ab}=(\partial_a h^i)(\partial_b h^j)\delta_{ij}.\]
Naturally, this bears the same interpretation as the previous result
and serves to suggest that relatively little of the information about
the original family of probability density functions is carried
through to the metric tensor itself.

The careful reader will note that we now have \textit{two} means to
the same end, and may wonder just how many more ways we may achieve
the above result. Indeed the following section serves to introduce a
general framework which will show that the answer is that there is an
infinite-fold degeneracy in the construction, and thus there is always
an infinite to one mapping between families of PDFs and Riemannian
metrics via the Fisher information metric.

%%%%%%%%%%%%%%%%%%%%%%%%%%%%%%%%%%%%%%%%%%%%%%%%%%%%%%%%%%%%%%%%%%%%%%%%%%%%%%%%%%%%%
\numberwithin{equation}{section}
\section{General results}\label{section2}
%%%%%%%%%%%%%%%%%%%%%%%%%%%%%%%%%%%%%%%%%%%%%%%%%%%%%%%%%%%%%%%%%%%%%%%%%%%%%%%%%%%%%
In this section we will elaborate on a more general set of statements
which allow for definitions independent of dimensionality and
functional dependence of the parameters of the PDF in question.
We begin by showing how to construct a family of
spatially disjoint probability density functions out of individual
families of probability density functions.

\begin{defn}
  The spatially disjoint product of two families of probability
  density functions on the same parametric domain,
  $P_1=P_1(x^1,\ldots,x^k;\theta):X_{1}\times M\rightarrow\mathbb{R}$ and\newline
  $P_2=P_2(x^1,\ldots,x^n;\theta):X_{2}\times M\rightarrow\mathbb{R}$, is
  defined as
  \[
  (P_1\odot P_2)(x^1\ldots,x^{n+k};\theta)=
  P_1(x^1,\ldots,x^n;\theta)\cdot
  P_2(x^{n+1},\ldots,x^{n+k};\theta).
  \]
  Note that $P_{1}\odot P_{2}:(X_{1}\times X_{2})\times
  M\rightarrow\mathbb{R}$ and we write $P^{\odot n}$ where we mean
  $\bigodot_{i=1}^{n}P$.
\end{defn}

Given this, we will here show how a special property of spatially
disjoint products underpins all the general results achieved in this
work. That is, the Fisher information metric transforms the spatially
disjoint product of probability density functions into a sum of their
corresponding, individually considered metric tensors.

\begin{thrm}
  \label{thrm:sdp}
  If $P=P({ x};\theta)$ is a probability density function with a
  decomposition\newline $P=\bigodot P_i^{\odot e_i}$ for some $P_i$ and
  $e_i\in\mathbb{N}^{+}$ then
  $g_{ab}\left(\bigodot P_i^{\odot e_i}\right)=\sum e_i g_{ab}(P_i)$.
\end{thrm}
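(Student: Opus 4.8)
The plan is to reduce the general statement to two simple facts about how the Fisher information integral interacts with the spatially disjoint product, and then iterate. First I would establish the base case: if $P = P_1 \odot P_2$ with $P_1$ on spatial domain $X_1$ and $P_2$ on $X_2$ (both over the same parametric domain $M$), then $g_{ab}(P_1 \odot P_2) = g_{ab}(P_1) + g_{ab}(P_2)$. The key observation is that $\ln(P_1 \odot P_2) = \ln P_1 + \ln P_2$, where $\ln P_1$ depends only on the $X_1$-coordinates and $\ln P_2$ only on the $X_2$-coordinates, so $\partial_a \ln(P_1\odot P_2) = \partial_a \ln P_1 + \partial_a \ln P_2$. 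Substituting into \autoref{eq:Fisher} and expanding the product of the two $\partial \ln$ factors yields four terms; the integral over $X_1 \times X_2$ factorises since the integrand in each term is a product of a function on $X_1$ and a function on $X_2$.

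The cross-terms then vanish by normalisation: a typical cross-term is
\[
\int_{X_1\times X_2} P_1 P_2\,(\partial_a \ln P_1)(\partial_b \ln P_2)\,\bd x
= \left(\int_{X_1} P_1\,\partial_a\ln P_1\,\bd x\right)\left(\int_{X_2} P_2\,\partial_b\ln P_2\,\bd x\right),
\]
and each factor is of the form $\int P\,\partial_a\ln P = \int \partial_a P = \partial_a \int P = \partial_a 1 = 0$, using that $P_1,P_2$ are individually normalised and that the continuity/smoothness assumptions on the family permit differentiation under the integral sign. The two surviving diagonal terms are $\int_{X_1} P_1 (\partial_a\ln P_1)(\partial_b\ln P_1)\,\bd x \cdot \int_{X_2} P_2\,\bd x = g_{ab}(P_1)$ and similarly $g_{ab}(P_2)$, giving the base identity. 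An easy induction on the number of factors then shows $g_{ab}(P_1 \odot \cdots \odot P_m) = \sum_j g_{ab}(P_j)$; applying this with each $P_i$ repeated $e_i$ times gives $g_{ab}\!\left(\bigodot P_i^{\odot e_i}\right) = \sum_i e_i\, g_{ab}(P_i)$, as claimed.

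The main obstacle is not conceptual but a matter of hygiene: one must be careful that all the factorisations of integrals over $X_1\times X_2$ are legitimate (Fubini, which is fine here since everything is built from continuous, integrable probability densities) and, more importantly, that differentiating under the integral sign to conclude $\int P\,\partial_a\ln P = 0$ is justified — this is exactly where the hypotheses that $P_\theta$ and $\partial_a P_\theta$ are continuous and that the family varies smoothly over $M$ are used. I would state this justification once, in the base case, and then treat the induction as routine.
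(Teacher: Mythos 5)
Your proposal is correct and follows essentially the same route as the paper's own proof: expand the logarithmic derivative of the product, factorise the integral over the disjoint spatial domains, kill the cross-terms via $\int P\,\partial_a\ln P=\partial_a\int P=0$, and recognise the diagonal terms as the individual Fisher metrics. The only cosmetic difference is that you organise the multi-factor case as a two-factor base case plus induction, whereas the paper handles all factors at once with a double sum over indices.
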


\begin{proof}
  Let us rewrite $P=\bigodot \hat{P}_{i}^{\odot e_{i}}=\bigodot P_j$
  where each $\hat{P}_i$ has been accumulated into the spatially
  disjoint product $e_i$ times, that is, $P_{j}=\hat{P}_{i}$ for
  $e_{i}$ many $j$. Then, in order to compute $g(P)$ we expand
  logarithmic derivatives to arrive at
  \[g_{ab}(P)=\sum_i\sum_j\int_{X}\bd { x}\,\frac{P}{P_i
    P_j}(\partial_a P_i)(\partial_b P_j).\]
  To proceed we must evaluate the double sum, and to do so we examine
  the cases $j=i$ and $j\ne i$ separately. In the event of the latter,
   $j\ne i$, we have
  \[ \int_{X}\bd { x}\,\frac{P}{P_i P_j}(\partial_a P_i)(\partial_b
  P_j)=\left(\int_{X_{i}}\bd\,  x^a\cdots\bd  x^k\partial_a
    P_i\right)\left(\int_{X_{j}}\bd\,  x^m\cdots\bd  x^r\partial_b
    P_j\right), \]
  where we have expanded the integral as a product over its disjoint
  spatial domains and have suppressed all other terms as they were of
  the form $\int_{X_{i}}\bd  x^a\cdots\bd  x^kP_i=1$.  Moreover, we note
  that $P_{i}$ satisfies the conditions (by the definition of the
  probability density function) for the exchange of integral and
  derivative and so
\[ \int_{X_{i}}\bd  x^a\cdots\bd  x^k\partial_{a}P_i
= \partial_{a}\int_{X_{i}}\bd  x^a\cdots\bd  x^k P_i=\partial_{a}(1)=0.
\]
Thus contributions from terms where $j\ne i$ is zero. On the
other hand, the cases for which $i=j$ admit simple resolution as
\[ \int_{X}\bd  {x}\frac{P}{P_iP_i}(\partial_a
P_i)(\partial_b P_i)=\int_{X_{i}}\bd  x^a\cdots\bd  x^k(\partial_a
P_i)(\partial_b P_i) \frac 1 {P_i}=g_{ab}(P_i),
\]
where again we have expanded the integral as a product and suppressed
all terms whose integral was one. Finally, we recall that we had
exactly $e_{i}$ many $P_{j}$ such that $P_{j}=\hat{P}_{i}$ and so we
collect $e_i$ many such contributions of $g_{ab}(P_{j})$.
\end{proof}

\begin{rema}
  That we essentially require $M_{1}=M_{2}=M$ in the definition of the
  spatially disjoint product is a matter of some subtlety. Consider
  that if $M_{1}\ne M_{2}$ we would be within reason to set
  $M=M_{1}\times M_{2}$ and reinterpret the definition as
  \[(P_1\odot P_2)(x^1\ldots,x^{n+k};\theta,\phi)=
  P_1(x^1,\ldots,x^n;\theta)\cdot P_2(x^{n+1},\ldots,x^{n+k};\phi).
  \]
  In this case, however, $g(P)$ is not strictly the sum of $g(P_{i})$
  as the latter may all be of different dimension. Simply
  re-interpreting $P_{i}$ to have enlarged parametric domain $M$ will
  not solve this problem as then it may happen that $g(P_{i})$ will no
  longer be non-degenerate and so not a metric tensor. Thus, the
  direct ability of the above result to ``glue'' together disjoint
  metric tensors is apparent, but nuanced and not an immediate
  consequence of the exposition given.

  In effect then, care should be taken when examining the statement
  $g(\bigodot P_{i})=\sum g(P_{i})$ so as to ensure that it is done
  with the understanding that $g(P_{i})$ is to have zero entries where
  appropriate for the purpose of the sum, but not when considered as
  its own metric tensor. More formally, we could write
  $g(\bigodot P_{i})=\sum \tilde g(P_{i})$ where $\tilde g$ is
  expressed precisely as $g$, but is extended to all of $M$ as
  suggested above, and is free from interpretation as a metric
  tensor. Hereafter, it is taken for granted that such nuances are
  appreciated by the reader.
\end{rema}

The importance of \cref{thrm:sdp} cannot be overstated. From here on,
it is simply a matter of finding convenient forms of $g_{ab}(P_{i})$
for some parameterisation of $P_{i}$ so that we may take
$\bigodot P_{i}$ and arrive at a desired metric tensor. That is, if we
can find a $P_{i}$ such that
$g_{ab}(P_{i})\propto (\partial_{a}h^{i})(\partial_{b}h^{i})$ then we
can take $P=\bigodot P_{i}$ to find
$g_{ab}\propto (\partial_{a}h^{i})(\partial_{b}h^{j})\delta_{ij}$ by
the above. Here, the whole is more than the sum of its parts -- given
$g_{ab}\propto (\partial_{a}h^{i})(\partial_{b}h^{j})\delta_{ij}$ we
are able to find an $h$ for our desired manifold and then create a
desired $P$ out of constituent $P_{i}$, each containing some part of
$\{h^{i}\}$. Beginning with disjoint $P_{i}$, however, the qualities
which the individual distributions should exhibit, to attain a given
$g$, are not clear. Furthermore, we note here that while
$\bigodot P_{i}$ will yield the desired result, if we find multiple
families of probability density functions, we may equally well combine
them to achieve the same result.

Thus, what we really seek are simple forms of functional dependence of
families of probability density functions upon our set of
differentiable functions $h$ so that explicit computations may be
made. Recall that we saw, in the calculations in subsections
\ref{ssec:gauss} and \ref{ssec:sec}, that we may leverage
reparameterisation invariance of spatial domains to our
advantage. Such symmetries of the spatial domain allow us to
essentially eliminate any functional dependence of the integrals upon
the $h^{i}$ and produce multiplicative factors of $\partial_{a}h$ in
the process. To that end, we explore a generalisation of the symmetry
used in the above-mentioned subsections.

\begin{prop}
  \label{prop:symm}
  Fix a one-dimensional probability density function $\hat{P}(x)$ on
  $X$ for which $X$ remains invariant under the change of variables
  $y=f(x;\theta)$, for some differentiable family of diffeomorphisms
  $f:X\times M\rightarrow X$ (the parameter space is $M$) and let
  $P(x;\theta)=f_{x}(x;\theta)\hat{P}(f(x;\theta))$ such that
  $\partial_{a}P\not\equiv 0$ where we write $f_{x}$ for
  $\frac{\partial f}{\partial x}$ and $f_{a}$ for
  $\partial_{a}f$. Then
  \begin{equation}
   \label{eq:symm}\tag{4.4}
 g_{ab}(P)=\int_{X}\frac{f_{ax}f_{bx}}{(f_{x})^{2}}\hat{P}(y)+\left(\frac{\partial(f_{a}f_{b})}{\partial
   y}+f_{a}f_{b}\frac{\bd \ln \hat{P}(y)}{\bd 
   y}\right)\frac{\bd  \hat{P}(y)}{\bd  y}\bd y,    
  \end{equation}
  where we assume that we have written all functions in terms of
  $y=f(x;\theta)$ using the expression $x=f^{-1}(y;\theta)$ where
  necessary.
\end{prop}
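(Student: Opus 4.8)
The plan is to prove this by direct computation, in the spirit of the Gaussian and hyperbolic secant constructions of \cref{ssec:gauss} and \cref{ssec:sec}: write out the logarithmic derivatives of $P$, insert them into \autoref{eq:Fisher}, and then reparametrise the integral by $y=f(x;\theta)$, exploiting the invariance of $X$ to keep the domain of integration fixed. The same reparametrisation immediately settles normalisation, since $f(\cdot;\theta)$ is a diffeomorphism of $X$ and hence $\int_X P\,\bd x=\int_X f_x\,\hat P(f)\,\bd x=\int_X\hat P(y)\,\bd y=1$.

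For the metric, from $\ln P=\ln f_x+\ln\hat P(f)$ one obtains, differentiating at fixed $x$,
\[ \partial_a\ln P=\frac{f_{ax}}{f_x}+(\ln\hat P)'(f)\,f_a \]
(writing $(\ln\hat P)'$ for $\bd\ln\hat P/\bd y$), and likewise with the index $b$. Substituting the product of these two expressions into $g_{ab}(P)=\int_X P\,(\partial_a\ln P)(\partial_b\ln P)\,\bd x$ and changing variables to $y=f(x;\theta)$ --- under which the domain is again $X$, $P\,\bd x=\hat P(y)\,\bd y$, and the surviving factors are rewritten through $x=f^{-1}(y;\theta)$ --- yields
\[ g_{ab}(P)=\int_X\hat P(y)\left(\frac{f_{ax}}{f_x}+\frac{\bd\ln\hat P(y)}{\bd y}\,f_a\right)\left(\frac{f_{bx}}{f_x}+\frac{\bd\ln\hat P(y)}{\bd y}\,f_b\right)\bd y. \]

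It then remains to expand this bracket into its three terms and re-organise. The pure $f_{ax}f_{bx}/f_x^{2}$ piece is already the first term of \autoref{eq:symm}. For the other two I would invoke the elementary identity $\hat P\,\bd\ln\hat P/\bd y=\bd\hat P/\bd y$: the square term $\hat P\,(\bd\ln\hat P/\bd y)^{2}f_af_b$ collapses to $f_af_b\,(\bd\ln\hat P/\bd y)(\bd\hat P/\bd y)$, while for the cross term the crucial step is to note that, at fixed $\theta$, $\partial/\partial y=f_x^{-1}\,\partial/\partial x$, so that $f_x^{-1}(f_{ax}f_b+f_af_{bx})=\partial(f_af_b)/\partial y$ --- here using commutation of mixed partials, $\partial_x\partial_af=\partial_a\partial_xf$ --- and the cross term becomes $(\bd\hat P/\bd y)\,\partial(f_af_b)/\partial y$. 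Adding these last two reproduces precisely the second integrand of \autoref{eq:symm}, with no integration by parts required.

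The calculation is otherwise routine; the points that demand care are purely of bookkeeping. Chief among them is keeping straight which variable is held fixed: $\partial_a$ in \autoref{eq:Fisher} acts at fixed $x$, whereas after the substitution $\partial/\partial y$ acts at fixed $\theta$, and it is exactly the chain rule $\partial/\partial y=f_x^{-1}\,\partial/\partial x$ relating the two that converts the cross term into a total $y$-derivative; presenting this transparently, instead of leaving it buried inside composed functions, is where I expect the actual work to lie. The change of variables produces no boundary contributions precisely because $X$ is assumed invariant under $y=f(x;\theta)$, and writing $\ln\hat P$ (and dividing by $P$ in \autoref{eq:Fisher}) tacitly presumes $\hat P>0$; the standing hypothesis $\partial_aP\not\equiv0$ is what finally guarantees that the $g_{ab}$ obtained is non-degenerate and so an honest metric tensor.
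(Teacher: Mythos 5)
Your proposal is correct and follows essentially the same route as the paper's own proof: compute $\partial_a\ln P$, change variables to $y=f(x;\theta)$ using the invariance of $X$, and convert the cross term into $\partial(f_af_b)/\partial y$ via $\partial/\partial x=f_x\,\partial/\partial y$ together with the product rule. The only differences are cosmetic (you simplify the logarithmic derivative before squaring and make the commuting of mixed partials and the positivity of $\hat P$ explicit, which the paper leaves implicit).
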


\begin{proof}
  We first check that $P(x;\theta)=f_{x}(x;\theta)\hat{P}(f(x;\theta))$ is
  normalised. To that end, let $y=f(x;\theta)$
\[ \int_{X} P \bd x=\int_{X}
f_{x}\hat{P}\bd x=\int_{X}f_{x}\hat{P}\frac{\bd y}{f_{x}}=1.
\]
 Then we compute the logarithmic derivatives necessary for the Fisher
 information metric
 \begin{equation*}
 \partial_{a}\ln
 P=\frac1{f_{x}\hat{P}(f)}\left(\frac{\bd \hat{P}(f)}{\bd 
     f}(f_{a}f_{x})+\hat{P}(f)(f_{ax})\right).   
 \end{equation*}
 We proceed with the computation by making the change of variables
 $y=f(x;\theta)$
  \begin{align*}
    g_{ab}&=\int_{X}\frac1{(f_{x})^{2}\hat{P}(f)}\left(\frac{\bd \hat{P}(y)}{\bd 
     y}(f_{a}f_{x})+\hat{P}(y)(f_{ax})\right)\left(\frac{\bd \hat{P}(y)}{\bd 
            y}(f_{b}f_{x})+\hat{P}(y)(f_{bx})\right)\bd  y\\
    &=\int_{X}f_{a}f_{b}\frac{\bd  P(y)}{\bd  y}\frac{\bd  \ln P(y)}{\bd 
      y} +
      \frac{f_{ax}f_{bx}}{(f_{x})^{2}}P(y)+\left(\frac{f_{a}f_{bx}+f_{b}f_{ax}}{f_{x}}\right)
      \frac{\bd  P(y)}{\bd  y}\bd  y.
  \end{align*}
Finally, we recognise that $\frac{\partial}{\partial
  x}=f_{x}\frac{\partial}{\partial{y}}$ and that
$f_{a}f_{bx}+f_{b}f_{ax}=\frac{\partial(f_{a}f_{b})}{\partial x}$, and
collect terms to arrive at the result.
\end{proof}

Of course, examining symmetry at such an abstract level cannot be
expected to yield concrete answers immediately and so that the
statement of \cref{prop:symm} is opaque and not obviously useful is
not surprising. Indeed, in what follows we make various simplifying
assumptions about the functional form of the symmetry function $f$ to
arrive at generalisations of familiar results.

We begin by noticing that there is a term in \autoref{eq:symm} which
is proportional to $f_{a}f_{b}$. If it could be arranged that
$f_{a}f_{b}$ be independent of $y$, then we could simply extract a
term proportional to $f_{a}f_{b}$ from the result -- a term whose
importance we already know. Moreover, if we could ensure that the
other terms vanish, we would have $g_{ab}\propto f_{a}f_{b}$ and
achieve our general result once more.

To that end, we choose to require that $f_{x}$ be constant and
$f_{ax}=0$. Although this is likely not the \textit{only} way to
achieve our desired effect, it will certainly suffice. In this case,
we see immediately that $f(x;\theta)=cx+h(\theta)$ is the general
solution -- but this is nothing other than the statement of
translation invariance. Thus, we may achieve the following results by
means of \cref{prop:symm}.

\begin{prop}
  \label{prop:trans}
  Fix a one-dimensional probability density function $\hat{P}$ such
  that the change of variables $y=x-h$ for $h(\theta)$ a
  differentiable function on $M\subset \mathbb{R}^{m}$ leaves the
  spatial domain $X$ unchanged. Let
  $ P(x;\theta)=\hat{P}\left(x-h \right)$ then
  $g_{ab}=(\partial_a h)(\partial_b h)D$ where
  \begin{equation*}
    D=\int_{X}\bd  x \left(\frac{\partial P(x)}{\partial
      x}\right)\left(\frac{\partial \ln P(x)}{\partial x}\right).
  \end{equation*}
\end{prop}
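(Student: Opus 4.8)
The plan is to read this off as the simplest instance of \cref{prop:symm}. First I would identify the relevant symmetry function: take $f(x;\theta)=x-h(\theta)$, so that $f_{x}=1$ (constant) and $f_{ax}=\partial_{x}\partial_{a}f=0$. This is exactly the translation-invariant case flagged in the paragraph preceding the statement. With this choice $P(x;\theta)=f_{x}(x;\theta)\,\hat P(f(x;\theta))=\hat P(x-h)$, matching the hypothesis, and the change of variables $y=f(x;\theta)=x-h$ leaves $X$ invariant by assumption, so \cref{prop:symm} applies verbatim.

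Next I would substitute into \autoref{eq:symm}. Since $f_{ax}=0$ the first term drops out entirely. Since $f_{a}=-\partial_{a}h$ and $f_{b}=-\partial_{b}h$ depend only on $\theta$ and not on $x$ (hence not on $y$), we have $f_{a}f_{b}=(\partial_{a}h)(\partial_{b}h)$ constant in $y$, so $\partial(f_{a}f_{b})/\partial y=0$ and $f_{a}f_{b}$ pulls out of the integral. What remains is
\[
g_{ab}=(\partial_{a}h)(\partial_{b}h)\int_{X}\frac{\bd\ln\hat P(y)}{\bd y}\,\frac{\bd\hat P(y)}{\bd y}\,\bd y,
\]
so that $D$ is precisely this $y$-integral. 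To match the stated form of $D$, I would observe that because the translation $y=x-h$ is a bijection of $X$ onto itself with unit Jacobian, $\int_{X}\frac{\bd\ln\hat P(y)}{\bd y}\frac{\bd\hat P(y)}{\bd y}\,\bd y=\int_{X}\frac{\partial\ln P(x)}{\partial x}\frac{\partial P(x)}{\partial x}\,\bd x$ for every fixed $\theta$ (the integrand is merely shifted), which is the displayed expression for $D$; in particular this also exhibits $D$ as independent of $\theta$, as it must be for $g_{ab}$ to be a well-defined metric.

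Should one prefer a self-contained argument, the same computation is short directly: $\partial_{a}\ln P=-(\partial_{a}h)\,(\ln\hat P)'(x-h)$, whence $g_{ab}=\int_{X}\hat P(x-h)\,(\partial_{a}h)(\partial_{b}h)\,\big[(\ln\hat P)'(x-h)\big]^{2}\,\bd x$, and the substitution $y=x-h$ together with the identity $\hat P\,\big[(\ln\hat P)'\big]^{2}=\hat P'\,(\ln\hat P)'$ gives $(\partial_{a}h)(\partial_{b}h)\int_{X}\hat P'(y)(\ln\hat P)'(y)\,\bd y$. There is no genuine obstacle here; the only points needing a word of care are the cosmetic one of re-expressing the $y$-integral over $\hat P$ as the claimed $x$-integral over $P$, and the implicit convergence and differentiation-under-the-integral hypotheses, both of which are already subsumed in the ambient definition of a family of probability density functions and in the statement of \cref{prop:symm}.
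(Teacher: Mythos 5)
Your proposal is correct and follows exactly the paper's route: the paper's proof is the single line ``Apply \cref{prop:symm} to $f(x;\theta)=x-h(\theta)$,'' and you have simply carried out that substitution in \autoref{eq:symm} explicitly (correctly noting $f_{ax}=0$, $\partial_y(f_af_b)=0$, and the translation-invariance step identifying the $y$-integral with the stated $D$). No gaps; the extra self-contained computation is consistent with the same result.
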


\begin{proof} Apply \cref{prop:symm} to $f(x;\theta)=x-h(\theta)$.
\end{proof}

\begin{corl}
  \label{corl:trans}
Fix one-dimensional probability density functions $P_{i}$ and let
  $h^i(\theta)$ be differentiable on $M\subset \mathbb{R}^{m}$
  and write $y^i=x^{i}-h^{i}$ such that $X_i$ is unchanged under
  this change of variables for all $i$.
  $P(x;\theta)=\bigodot
  P_i\left(x^{i}-h^{i}\right)^{\odot e_i}$
  gives $g_{ab}(P)=(\partial_a h^i)(\partial_b h^j)D_{ij}$
  where
 \[ D_{ij}=\left\{
   \begin{aligned}
     &e_i\int_{X_i}\bd  x^i \left(\frac{\partial P_i}{\partial
      x^i}\right)\left(\frac{\partial \ln P_i}{\partial x^i}\right),&
  i=j\\
  &0,&i\ne j
   \end{aligned}\right. \]
\end{corl}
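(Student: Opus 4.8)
The plan is to obtain this as an immediate consequence of \cref{prop:trans} and \cref{thrm:sdp}, applied in that order. First I would fix an index $i$ and apply \cref{prop:trans} with $\hat P = P_i$ and translation function $h^i$: since $X_i$ is assumed invariant under $y^i = x^i - h^i$, the proposition applies verbatim to $P_i(x^i - h^i)$ and yields
\[
  g_{ab}\bigl(P_i(x^i - h^i)\bigr) = (\partial_a h^i)(\partial_b h^i)\, D_i,
  \qquad
  D_i = \int_{X_i}\bd x^i\,\frac{\partial P_i}{\partial x^i}\,\frac{\partial \ln P_i}{\partial x^i},
\]
with no summation on $i$; for this to make sense one reads, as is already implicit in the statement, $P_i \in C^{1}(X_i)$, so that the translated family meets the standing continuity assumptions (its $\theta$-derivative being $-P_i'(x^i - h^i)\,\partial_a h^i$, which is continuous because $h^i \in C^1(M)$).

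Next I would observe that $P(x;\theta) = \bigodot P_i(x^i - h^i)^{\odot e_i}$ is precisely the spatially disjoint product of the translated families $P_i(x^i - h^i)$, each appearing $e_i$ times, so that \cref{thrm:sdp} applies --- its hypotheses being met since normalisation is preserved under the translation (the first line of the proof of \cref{prop:trans}) and continuity is as noted above. It then gives
\[
  g_{ab}(P) = \sum_i e_i\, g_{ab}\bigl(P_i(x^i - h^i)\bigr) = \sum_i e_i\,(\partial_a h^i)(\partial_b h^i)\, D_i .
\]
Repackaging the right-hand side as $(\partial_a h^i)(\partial_b h^j)\,D_{ij}$ with $D_{ij} := e_i D_i\,\delta_{ij}$ reproduces exactly the claimed $D_{ij}$, the off-diagonal entries vanishing automatically because each constituent metric involves only the single function $h^i$.

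The computation itself presents no real obstacle --- the corollary is a direct specialisation of results already in hand. The one point deserving care is the interpretive caveat recorded in the Remark after \cref{thrm:sdp}: each summand $g_{ab}(P_i(x^i - h^i))$ has rank at most one and so, for $m \ge 2$, is not a metric tensor in its own right; it enters the sum only in its extended form $\tilde g$, with zero entries in the directions on which $h^i$ does not depend. Making explicit that the $D_{ij}$ above are assembled in this extended sense is really the only subtlety.
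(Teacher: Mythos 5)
Your proposal is correct and follows exactly the paper's own (one-line) proof: apply \cref{prop:trans} to each translated factor and then \cref{thrm:sdp} to the spatially disjoint product, with the off-diagonal vanishing and the factor $e_i$ accounting for multiplicities. Your added remark about the rank-one summands needing the extended interpretation $\tilde g$ matches the caveat the paper itself records after \cref{thrm:sdp}.
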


\begin{proof}
  Combine \cref{prop:trans} and \cref{thrm:sdp}.
\end{proof}

\begin{rema}
  When $P_{i}$ are all Gaussian, $D_{ij}=\delta_{ij}$ and so the
  result of \autoref{eq:cm} follows as a special case.
\end{rema}

To demonstrate how one might achieve the encoding of an arbitrary
Riemannian metric tensor into a spatially disjoint product of
one-dimensional families of probability density functions, consider
the following example.

\newpage

\begin{exmp}
  Suppose we desire a hyperbolic metric tensor $g$ whose associated
  line element is given by
  $\frac1\beta(\bd \alpha^{2}+\bd \beta^{2})$, on the open subset
  $M=\{(\alpha,\beta)\in \mathbb{R}^{2}\,|\,\beta>1\}\subset
  \mathbb{H}^{2} $.
  With some work, it can be shown that an isometric embedding of $M$
  into $\mathbb{R}^{3}$ can be achieved through the function
  \[ h = \left(\frac{\cos \alpha}\beta,\frac{\sin \alpha}\beta,
    \ln\left(\beta+\sqrt{\beta^{2}-1}\right)-\frac{\sqrt{\beta^{2}-1}}\beta\right). 
  \]
  That is, $g=h^{*}\delta$. Moreover, it is evident that $h$ is at
  least $C^{1}$ so we may apply our construction to it and write, for
  example,
\[ P=P_{1}\left(x-h^{1}\right)\odot P_{2}\left(y-h^{2}\right)\odot
P_{3}\left(z-h^{3}\right), \]
  for any one-dimensional probability density functions $P_{i}$ which
  satisfy translation invariance as outlined in \cref{prop:trans}. By
  \cref{corl:trans} we then know that $g(P)=h^{*}D$ and so the
  result follows in the case that $D=\delta$.

  In particular then, we may choose to let $X_{i}=\mathbb{R}$ for
  $i\in\{1,2,3\}$ and put
  \[ \hat P_{1}(x)=\frac1{\sqrt{2\pi}}e^{-\frac12 x^{2}}, \quad
  \hat P_{2}(x)=\frac1\pi\sech x,\quad\hat P_{3}(x)=\frac1{\pi\left(1+x^{2}\right)},\]
  for which $D_{1}=1$ and $D_{2}=D_{3}=\frac12$. Thus, taking the
  values of $D_{i}$ into account, we may write
$P(x,y,z;\alpha,\beta)=\hat P_{1}\left(x-h^{1}\right)\odot
\hat P_{2}\left(y-\sqrt2h^{2}\right)\odot\hat P_{3}\left(z-\sqrt2h^{3}\right)$
to recover \[
P(x,y,z;\alpha,\beta)=
  \frac{ \left(\sqrt{2\pi^{5}}\right)^{-1}
         \sech\left(x-\frac{\sqrt 2\sin \alpha}\beta\right)
         e^{-\frac12\left(y-\frac{\cos \alpha}\beta\right)^{2}} } {1+\left[z+\frac{\sqrt{2\beta^{2}-2}}\beta-\sqrt2\ln\left(\beta+\sqrt{\beta^{2}-1}\right)\right]^{2}},
\]
defined on $\mathbb{R}^{3}\times M$, and for which we know, due to
\cref{corl:trans}, the metric tensor is $g=\beta^{-2}\delta$.
  It may also be verified directly that, given,
  \[
  P_{1}(x;\alpha,\beta) = \frac1{\sqrt{2\pi}}e^{-\frac12\left(x-\frac{\cos
        \alpha}\beta\right)^{2}},\quad
  P_{2}(x;\alpha,\beta) = \frac1\pi \sech\left(x-\frac{\sqrt 2\sin
      \alpha}\beta\right),
  \]\[
 P_{3}(x;\alpha,\beta) = \frac{\pi^{-1}} { 1 +
   \left[x+\frac{\sqrt{2\beta^{2}-2}}\beta -
     \sqrt2\ln\left(\beta+\sqrt{\beta^{2}-1}\right)\right]^{2}}\,\, \textrm{, we have}
  \] 
  \[ g(P_{1})= \frac1{\beta^{4}}
              \begin{bmatrix}
                \beta^{2}\sin^{2}\alpha&\beta\sin\alpha\cos\alpha\\
                \beta\sin\alpha\cos\alpha&\cos^{2}\alpha
              \end{bmatrix},\quad
    g(P_{2})= \frac1{\beta^{4}}
              \begin{bmatrix}
                \beta^{2}\cos^{2}\alpha&-\beta\sin\alpha\cos\alpha\\
                -\beta\sin\alpha\cos\alpha&\sin^{2}\alpha
              \end{bmatrix},
              \] \[
    g(P_{3})=\frac1{\beta^{4}}
              \begin{bmatrix}
                0 & 0\\
                0 & \beta^{2}-1
              \end{bmatrix},
              \] whose sum is as desired -- that is, $g\left(\bigodot
    P_{i}\right)=\sum g(P_{i})$ as \cref{thrm:sdp} assured us. Thus,
  we have managed to encode a desired metric tensor as the Fisher
  information metric of a spatially disjoint product of three,
  one-dimensional families of probability density functions. 
\end{exmp}

We can explore another possible simplifying form of transformation
$f$. Consider that were $f(x;\theta)\propto x$, then every term in
\autoref{eq:symm} would contribute a factor proportional to
$f_{a}$. Again, this is a desirable result and so we explore the
symmetry of scale invariance.

\begin{prop}
  \label{prop:scale}
  Fix a one-dimensional probability density function $\hat{P}$ such
  that the change of variables $y=xe^{h}$ for $h(\theta)$ a
  differentiable function on $M\subset \mathbb{R}^{m}$ leaves the
  spatial domain $X$ unchanged. Let
  $ P(x;\theta)=e^{h}\hat{P}\left(xe^{h} \right)$ then
  $g_{ab}=(\partial_a h)(\partial_b h)E$ where
  \begin{equation*}
    E=\int\limits_XP(x)\left(1+x\frac{\partial\ln P(x)}{\partial
      x}\right)^2\bd  x.
  \end{equation*}
\end{prop}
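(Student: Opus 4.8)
The plan is to apply \cref{prop:symm} directly, with the scaling diffeomorphism $f(x;\theta)=xe^{h(\theta)}$. This is a differentiable family of diffeomorphisms of $X$ onto $X$ by hypothesis, and (provided $\partial_{a}P\not\equiv 0$, which we tacitly assume as in \cref{prop:symm}) it is exactly the input needed to invoke \autoref{eq:symm}. First I would record the derivatives of $f$ that appear there: since $e^{h}$ is independent of $x$ we have $f_{x}=e^{h}$, $f_{a}=xe^{h}\partial_{a}h$ and $f_{ax}=e^{h}\partial_{a}h$, so that $f_{ax}/f_{x}=\partial_{a}h$ and $f_{a}f_{b}=x^{2}e^{2h}(\partial_{a}h)(\partial_{b}h)$. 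Rewriting the latter in the variable $y=f(x;\theta)$ via $x=ye^{-h}$ gives $f_{a}f_{b}=y^{2}(\partial_{a}h)(\partial_{b}h)$, hence $\partial_{y}(f_{a}f_{b})=2y(\partial_{a}h)(\partial_{b}h)$; notice that every surviving piece is now a $y$-dependent scalar times the common factor $(\partial_{a}h)(\partial_{b}h)$, which is precisely the simplification the surrounding discussion was aiming for.

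Substituting these into \autoref{eq:symm} and pulling the common factor out of the integral leaves
\[
g_{ab}=(\partial_{a}h)(\partial_{b}h)\int_{X}\left[\hat P(y)+\left(2y+y^{2}\frac{\bd \ln\hat P(y)}{\bd y}\right)\frac{\bd \hat P(y)}{\bd y}\right]\bd y .
\]
It then remains to recognise the bracketed integral as $E$. I would do this by rewriting the integrand of $E$ in the variable $y=xe^{h}$: a short computation gives $x\,\partial_{x}\ln P(x;\theta)=y\,\bd\ln\hat P(y)/\bd y$ and $P(x)\,\bd x=\hat P(y)\,\bd y$, so $E=\int_{X}\hat P(y)\bigl(1+y\,\bd\ln\hat P(y)/\bd y\bigr)^{2}\bd y$, which in particular shows $E$ does not depend on $\theta$. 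Expanding the square and twice using $\hat P\,\bd\ln\hat P/\bd y=\bd\hat P/\bd y$ turns this integrand into $\hat P+2y\,\bd\hat P/\bd y+y^{2}(\bd\ln\hat P/\bd y)(\bd\hat P/\bd y)$, which is termwise identical to the bracketed integrand above. Hence the two integrals agree and $g_{ab}=(\partial_{a}h)(\partial_{b}h)E$.

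The argument is essentially bookkeeping, so I do not expect a real obstacle; the only point that needs care is matching the two presentations of $E$. The constant $E$ is stated in terms of $P$, whereas \cref{prop:symm} produces an integral in terms of $\hat P$ and $y$, and one must use the invariance of $X$ under $y=xe^{h}$ to see that the change of variables identifies them exactly (and, along the way, that the apparent $\theta$-dependence of $E$ cancels). It is also worth stressing that, unlike the translation case of \cref{prop:trans}, here $f_{ax}=e^{h}\partial_{a}h$ need not vanish, so one genuinely needs \cref{prop:symm} in full and cannot route through \cref{prop:trans}.
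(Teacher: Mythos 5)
Your proposal is correct and follows essentially the same route as the paper: both apply \cref{prop:symm} with $f(x;\theta)=xe^{h(\theta)}$, compute the same four derivative quantities, and reduce the result to the constant $E$. The paper simply asserts that ``the result follows straightforwardly'' where you have (correctly) carried out the change of variables identifying the two presentations of $E$.
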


\begin{proof}
    We set $f(x;\theta)=e^{h(\theta)}x$ and compute the required
  derivatives for \cref{prop:symm} as follows
\[
    f_{a}=\partial_{a}h x e^{h},\quad f_{x} = e^{h},\quad
    f_{ax}=\partial_{a}h e^{h},\quad  \frac{\partial(f_{a}f_{b})}{\partial
                                  y}=2(\partial_{a}h)(\partial_{b}h)y.
 \]
  The result follows straightforwardly.
\end{proof}

\begin{corl}
  \label{corl:scale}
Fix one-dimensional probability density functions $P_{i}$ and let
  $h^i(\theta)$ be differentiable on $M\subset \mathbb{R}^{m}$
  and write $y^i=x^{i}e^{h^{i}}$ such that $X_i$ is unchanged under
  this change of variables for all $i$.
  $P(x;\theta)=\bigodot
  e^{h^{i}}P_i\left(x^{i}e^{h^{i}}\right)^{\odot e_i}$
  gives $g_{ab}(P)=(\partial_a h^i)(\partial_b h^j)E_{ij}$
  where
 \[ E_{ij}=\left\{
   \begin{aligned}
     &e_i\int_{X_i}\bd x^i P^i \left(1+x\frac{\partial\ln P_i}{\partial
      x^i}\right)^2,&
  i=j\\
  &0,&i\ne j
   \end{aligned}\right. \]
\end{corl}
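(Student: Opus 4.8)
The plan is to mirror the proof of \cref{corl:trans}, but with scale invariance in place of translation invariance: first extract the single-factor statement from \cref{prop:scale}, then assemble the factors with \cref{thrm:sdp}. Concretely, I would set $Q_i(x^i;\theta) := e^{h^i(\theta)}P_i\!\left(x^i e^{h^i(\theta)}\right)$ and note that the hypothesis that $X_i$ is left invariant under $y^i = x^i e^{h^i}$ is exactly the hypothesis of \cref{prop:scale} applied with $\hat P = P_i$ and $h = h^i$ (with $h^i$ differentiable on $M$ supplying the needed regularity). Hence \cref{prop:scale} tells us that each $Q_i$ is a normalised one-dimensional family on $X_i$ with
\[ g_{ab}(Q_i) = (\partial_a h^i)(\partial_b h^i)\,E_i, \qquad E_i = \int_{X_i} P_i(x)\left(1 + x\,\frac{\partial \ln P_i(x)}{\partial x}\right)^2 \bd x, \]
a symmetric tensor defined over all of $M$ (generically of rank one, but that plays no role in what follows).

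Next I would observe that $P = \bigodot Q_i^{\odot e_i}$ is precisely a spatially disjoint product of the $Q_i$: the variables $x^i$ range over the disjoint copies $X_i$, while every $Q_i$ carries the \emph{same} parametric domain $M$, the parameters entering only through $h^i(\theta)$. Thus \cref{thrm:sdp} applies directly and gives $g_{ab}(P) = \sum_i e_i\, g_{ab}(Q_i) = \sum_i e_i (\partial_a h^i)(\partial_b h^i)\,E_i$. Rewriting this single sum as a sum over the index pair, $\sum_i e_i (\partial_a h^i)(\partial_b h^i) E_i = (\partial_a h^i)(\partial_b h^j)\,(e_i E_i \delta_{ij})$, identifies the form with $(\partial_a h^i)(\partial_b h^j) E_{ij}$ for $E_{ij}$ as stated, which is the claim.

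There is no genuinely hard step here, since the statement is a corollary of two results already in hand; the point requiring care is the bookkeeping flagged in the Remark after \cref{thrm:sdp}. The tensors $g_{ab}(Q_i)$ entering the sum must be read as defined on all of $M$, with zero entries in the directions the $i$-th factor does not feel. In our setting this is automatic, because \cref{prop:scale} already delivers $g_{ab}(Q_i)$ in the form $(\partial_a h^i)(\partial_b h^i)E_i$ with the derivatives taken with respect to every $\theta^a$, so the extension is built in and no re-interpretation of the $Q_i$ is needed. The only remaining routine checks are that each $Q_i$ inherits the continuity and differentiability required of a family of probability density functions --- immediate from $P_i$ being such a family and $h^i$ being differentiable --- and that $\partial_a Q_i \not\equiv 0$, so that \cref{prop:scale} is genuinely applicable.
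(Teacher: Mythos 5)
Your proposal is correct and follows exactly the paper's own (one-line) proof: apply \cref{prop:scale} to each factor and then combine via \cref{thrm:sdp}, with the diagonal bookkeeping of $E_{ij}$ spelled out. The extra care you take regarding the extension of each $g_{ab}(Q_i)$ to all of $M$ matches the nuance the paper flags in its Remark following \cref{thrm:sdp}.
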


\begin{proof}
  Combine \cref{prop:scale} and \cref{thrm:sdp}.
\end{proof}

\begin{corl}
  Every Riemannian metric tensor may be reached as the result of the
  Fisher information metric acting upon a spatially disjoint product
  of families of one-dimensional probability density functions.
\end{corl}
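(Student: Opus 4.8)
The plan is to combine the Nash embedding theorem with \cref{corl:trans}, exactly as the Gaussian construction of \cref{ssec:gauss} did with \autoref{eq:cm}, but now routing everything through the spatially disjoint product so that the family produced is manifestly assembled from one-dimensional pieces, and so that \cref{thrm:sdp} does the bookkeeping.

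First I would invoke the $C^{1}$ Nash embedding theorem \cite{Nash} to obtain, for the given Riemannian manifold $(\mathcal{M},g)$, an integer $n$ and a map $h=(h^{1},\dots,h^{n})\in C^{1}(\mathcal{M})$ with $g=h^{*}\delta$, that is, $g_{ab}=(\partial_{a}h^{i})(\partial_{b}h^{j})\delta_{ij}$. Next I would fix $X_{i}=\mathbb{R}$ for each $i\in\{1,\dots,n\}$, so that the translation $y^{i}=x^{i}-h^{i}(\theta)$ trivially leaves each $X_{i}$ invariant, take every multiplicity $e_{i}=1$, and let every constituent be the standard unit-variance Gaussian $\hat P_{i}(x)=(2\pi)^{-1/2}e^{-x^{2}/2}$. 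A one-line computation gives $(\partial_{x}\hat P_{i})(\partial_{x}\ln\hat P_{i})=x^{2}\hat P_{i}$, so the constant $D_{i}$ appearing in \cref{corl:trans} equals the variance $\int_{\mathbb{R}}x^{2}\hat P_{i}\,\bd x=1$, whence $D_{ij}=\delta_{ij}$.

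Forming $P(x;\theta)=\bigodot_{i=1}^{n}\hat P_{i}(x^{i}-h^{i})$, \cref{corl:trans} then yields $g_{ab}(P)=(\partial_{a}h^{i})(\partial_{b}h^{j})D_{ij}=(\partial_{a}h^{i})(\partial_{b}h^{j})\delta_{ij}=g_{ab}$, and $P$ is by construction a spatially disjoint product of $n$ one-dimensional families. Along the way one checks that each factor $\hat P_{i}(x^{i}-h^{i})$ meets the regularity requirements of \cref{section1}: continuity in $x^{i}$ is immediate, and $\partial_{a}\big[\hat P_{i}(x^{i}-h^{i})\big]=-\hat P_{i}'(x^{i}-h^{i})\,\partial_{a}h^{i}$ is continuous because $h\in C^{1}$ and $\hat P_{i}'$ is continuous.

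I expect the only real obstacle to be one of citation scope rather than mathematics: one must make sure the version of the Nash embedding theorem quoted covers the manifold at hand (compact versus complete, and with which target dimension), noting that only $C^{1}$ regularity of $h$ is needed here, which is precisely what \cref{corl:trans} consumes. Everything else — the vanishing of cross terms, the evaluation of $D_{i}$, and the assembly of the disjoint product — is already packaged in \cref{thrm:sdp} and \cref{corl:trans}. Finally I would remark that replacing the Gaussians by other translation-invariant families with constant $D_{i}$ (using the argument $x^{i}-h^{i}/\sqrt{D_{i}}$ as in the worked example, or instead using \cref{corl:scale}) produces infinitely many further families with the same Fisher information metric, which is the infinite-fold degeneracy advertised in \cref{ssec:sec}.
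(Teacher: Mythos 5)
Your proposal is correct and follows essentially the same route as the paper's own (one-line) proof: obtain the $C^{1}$ pullback $h$ with $g=h^{*}\delta$ from the Nash embedding theorem and feed it into \cref{corl:trans}, with your explicit Gaussian choice merely instantiating the constituents so that $D_{ij}=\delta_{ij}$. The extra details you supply (the computation of $D_{i}$, the regularity check, and the remark on degeneracy) are consistent with the paper's surrounding discussion and do not change the argument.
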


\begin{proof}
  Apply either \cref{corl:scale} or \cref{corl:trans} to the desired
  $C^{1}$ pullback $h$, which exists due to the isometric embedding of
  the desired manifold in $\mathbb{E}^{n}$ via the Nash Embedding
  theorem.
\end{proof}

It can now be seen that relatively simple computations give rise to
highly useful results by way of \cref{thrm:sdp}. Indeed, to extend
this work one need only find other families of probability density
functions whose Fisher information metric can be made to be
proportional to $(\partial_{a}h) (\partial_{b}h)$ in order to combine
them in the requisite multiplicity to allow $h$ to be the pullback for
a desired Riemannian metric tensor. That we made explicit use of
spatial domain symmetries using \cref{prop:symm} should be seen as
merely a convenient and intuitive way of making use of \cref{thrm:sdp}
to construct desirable results.

%%%%%%%%%%%%%%%%%%%%%%%%%%%%%%%%%%%%%%%%%%%%%%%%%%%%%%%%%%%%%%%%%%%%%%%%%%%%%%%%%%%%%
\section{Discussion}
%%%%%%%%%%%%%%%%%%%%%%%%%%%%%%%%%%%%%%%%%%%%%%%%%%%%%%%%%%%%%%%%%%%%%%%%%%%%%%%%%%%%%
That we can associate a Reimannian information manifold with a well-defined 
metric to a given family of probability distribution functions is a remarkable
thing. Indeed, the power of this statement immediately begs the question of how much statistical, or information theoretic properties can be captured in the language of differential geometry. It is clear that the Fisher metric captures only a small amount of information about the family of PDFs, however the metric is but one differential geometric structure, and one could imagine that more information may be translated into the language of form fields of different order.

What we have shown here is in line with the string theory ideas of  holographic duality, which indicate that any scale-free gauge theory should give rise to a hyperbolic geometry. Different scale-free gauge theories should however give rise to different field contents, above and beyond the metric, depending on the operators which can be formed in the gauge theory. As discussed in the introduction, information geometry has already been used to go from: $\text{gauge theory}\rightarrow\text{PDF}\rightarrow\text{metric}$. Thus it would be interesting, both from the information theoretic point of view, as well as from the holographic point of view to see what more differential structure can be encoded in such mappings.

This article is our attempt to formulate a crisp statement about the uniqueness of the association of a metric to a probability distribution. We saw how the Fisher information metric took a spatially disjoint product of probability distributions to a sum of the individual metric tensors. We leveraged this result to entirely reverse the computation, in generality. 
In fact, we found that it is possible to explicitly construct any Riemannian metric 
via the spatially disjoint product of one-dimensional probability density functions exhibiting a select spatial domain symmetry. This symmety in fact features in a crucial way in 
our construction to inject dependence upon the
components of the pullback used to isometrically embed the desired
metric in $\mathbb{E}^{n}$. Moreover, up to the spatial domain
symmetries mentioned and some mild conditions on the continuity of the
probability density functions, we have shown that such a construction may be
given in terms of arbitrary probability density functions. 

While our results appear to be quite negative in terms of the amount of information encoded in the Fisher metric from a PDF, we propose to interpret it as a signal that, in order to fully capture a duality that seems to point to a one-to-one map between string theory on $AdS_{5}\times S^{5}$ and maximally supersymmetric Yang-Mills theory on the $AdS$ boundary, a deeper understanding of information geometry is required. We leave this for future work.
 
%%%%%%%%%%%%%%%%%%%%%%%%%%%%%%%%%%%%%%%%%%%%%%%%%%%%%%%%%%%%%%%%%%%%%%%%%%%%%%%%%%%%%
\section{Acknowledgements}
%%%%%%%%%%%%%%%%%%%%%%%%%%%%%%%%%%%%%%%%%%%%%%%%%%%%%%%%%%%%%%%%%%%%%%%%%%%%%%%%%%%%% 
JS and TC are grateful for the URC National Research Foundation (NRF)
of South Africa under grant number 87667. JM acknowledges support from
the NRF Competitive Support for Rated Researcher program under grant
CPRR 90519.

\newpage

\end{document}